\newcommand{\mc}{\mathcal}
\newcommand{\np}{{\mc{NP}}}
\newtheorem{theorem}{Theorem}
\newtheorem{lemma}{Lemma}
\newtheorem{corollary}{Corollary}
\newtheorem{rrule}{Reduction Rule}
\begin{document}

\begin{frontmatter}

\title{On the Complexity of Multi-Parameterized Cluster Editing\tnoteref{note1}}
\tnotetext[note1]{\noindent A preliminary version of a portion of
this paper appeared in the $7^{th}$ Annual International Conference on 
Combinatorial Optimization and Applications (COCOA 2013) \cite{Abukhzam13}}

\author{Faisal N. Abu-Khzam}
\ead{\noindent faisal.abukhzam@lau.edu.lb}

\address{Department of Computer Science and Mathematics\\
Lebanese American University\\
Beirut, Lebanon}

\author{}
\address{}
\thispagestyle{empty}

\begin{abstract}

The Cluster Editing problem seeks a transformation of a given undirected 
graph into a disjoint union of cliques via a minimum number of edge 
additions or 
deletions. 
A multi-parameterized version of the problem is studied, featuring  
a number of input parameters that bound the 
amount of both edge-additions and deletions
per single vertex, as well as the size of a clique-cluster.
We show that the problem remains $\np$-hard even when only one edge
can be deleted and at most two edges can be added per vertex.
However, the new formulation allows us to solve Cluster Editing (exactly)
in polynomial time when the number of 
edge-edit operations per vertex is smaller than half the minimum cluster size.
In other words, Correlation Clustering can be solved efficiently 
when the number of false positives/negatives per single data element
is expected to be small compared to the minimum cluster size.
As a byproduct, 
we obtain a kernelization algorithm that delivers
linear-size kernels when the two edge-edit bounds are small constants.

\end{abstract}

\begin{keyword}
Cluster Editing \sep Correlation Clustering \sep
Multi-parameterization \sep  Kernelization 
\end{keyword}

\end{frontmatter}

\section{Introduction}

Given a simple undirected graph $G = (V, E)$ and an integer $k > 0$, the 
Cluster Editing Problem asks whether $k$ or less edge additions or deletions 
can transform $G$ into a graph whose connected components are cliques.
Cluster Editing is $\np$-Complete \cite{KM86,Shamir04}, but 
it is fixed-parameter tractable with respect to the parameter $k$ 
\cite{Cai96,Gramm05} 
\footnote{We assume familiarity with the notions
of fixed-parameter tractability and kernelization 
algorithms \cite{CFKLMPPS15,DF13,FG06,Niedermeier06}.}.
The problem received considerable attention recently as can 
be seen from a long sequence of continuous algorithmic improvements (see
\cite{Bocker12, BBBT09, BBK11,Cao-Chen,Chen-Meng,Gramm05,Guo09}).
The current asymptotically fastest fixed-parameter algorithm runs in 
$O^{*}(1.618^k)$ time \cite{Bocker12}. Moreover, a kernel of order $2k$ was
obtained recently in \cite{Chen-Meng}. This means that an arbitrary Cluster
Editing instance can be reduced in polynomial-time into an equivalent instance 
where the number of vertices is at most $2k$. The number of edges in the reduced
instance can be quadratic in $k$. 

Cluster Editing can be viewed as a model for 
accurate unsupervised ``correlation clustering.''
In such context, edges to be deleted or added from a given instance are 
considered false positives or false negatives, respectively. Such 
errors could be 
small in some practical applications, and they tend to be even smaller per 
input object, or vertex. In fact, a single data element that is causing 
too many false positive/negatives might be considered as outlier.

We consider a parameterized version of Cluster Editing
where both the number of edges that can be deleted and the number of edges 
that can be added, per vertex, are bounded by input parameters.
We refer to these two bounds by {\em error parameters}. 
Similar parameterizations appeared in \cite{Heggernes10} and \cite{KU12}.
In \cite{Heggernes10}, 
two parameters $p$ and $q$ were used to bound (respectively)
(i) the number of edges that can be added between elements of the same 
cluster and (ii) the number of edges that can be deleted between a 
cluster and the rest of the graph.
In \cite{KU12}, the total number of edge-edit operations, 
per vertex, and the number of clusters in the target solution are used 
as additional parameters.
We shall see that setting separate bounds on the two parameters 
could affect the complexity as well as algorithmic solutions of the problem. 

We introduce another parameter that bounds, from below, the minimum
acceptable cluster size and present a polynomial time algorithm 
that solves 
Cluster Editing, exactly, whenever the sum of error parameters is small 
compared to the minimum cluster size (in the target solution). 
This condition could be of particular interest
in applications where the cluster size is expected to be large or
when error parameters per data element are not expected to be high. 
In this respect, 
the message conveyed by our work bears the same theme as another, rather
experimental, study of various clustering methods conducted in \cite{DLS12}
where it was suggested that Clustering is not as hard as claimed by 
corresponding $\np$-hardness proofs. 

We shall also study the complexity of the multi-parameterized version
of Cluster Editing when the error-parameters are small constants. In 
particular, we show that Cluster Editing remains $\np$-hard even
when at most one edge can be deleted and at most two edges can be 
added per vertex. Moreover, we show in this case 
that a simple reduction procedure yields a problem kernel 
whose total size is linear in the parameter $k$. Previously
known kernelization algorithms cannot be applied to the considered
multi-parameterized version and they deliver kernels whose order (number 
of vertices only) is linear in $k$.

The paper is organized as follows: section 2 presents some preliminaries;
in section 3 we study the complexity of Cluster Editing when parameterized
by the error-parameters; section 4 is devoted to a general reduction procedure;
the consequent complexity results are presented in sections 5,
and section 6 concludes with a summary.

\section{Preliminaries}

We adopt common graph theoretic terminologies, such as neighborhood, vertex
degree, 
adjacency, etc. The term non-edge is used to designate a pair of 
non-adjacent vertices. Given a graph $G=(V,E)$, and a set
$S\subset V$, the subgraph induced by $S$ is denoted by $G[S]$. 
A clique in a graph is a subgraph induced by a set of pair-wise adjacent 
vertices. An edge-editing operation is either a deletion or an addition of 
an edge. We shall use the term {\it cluster graph} to denote a transitive 
undirected graph, which consists of a disjoint union of cliques, as 
connected components.

For a given graph $G$ and parameter $k$, the Parameterized Cluster Editing 
problem asks whether $G$ can be transformed into a cluster graph via $k$ or 
less edge-editing operations. 
In this paper, we consider a multi-parameterized 
version of this problem that assumes a set of parameters (independent of the
input). 
We shall use the list of parameters in the name of the problem. For example
$(a,d,k)$-Cluster Editing is formally defined as follows.

\newpage

\noindent
\underline{{\bf $(a,d,k)$-Cluster Editing}:}

\vspace{3pt}

\noindent
{\bf Input:} A graph $G$, parameters $k, a, d$, and two functions
$\alpha: V(G)\rightarrow \{0,1, \cdots, a\}$,
$\delta: V(G)\rightarrow \{0,1, \cdots, d\}$.

\noindent
{\bf Question:} Can $G$ be transformed into a disjoint union of 
cliques by at most $k$ edge-edit operations such that:
 
\noindent
for each vertex $v\in V(G)$, the number of added (deleted) edges incident 
on $v$ is at most $\alpha(v)$ ($\delta(v)$ respectively)?

\vspace{10pt}

\noindent
We shall further use some special terminology to better-present
our results. The
expression {\it solution graph} may be used instead of cluster graph, when 
dealing with a specific input instance. Edges that are not allowed to be 
in the cluster graph are called {\it forbidden} edges, 
while edges that are (decided to be) in the solution graph are {\it permanent}.
An induced path of length two is called a {\it conflict triple}, which is 
so named because it can never be part of a solution graph. 
To {\it cliquify} a set $S$ of vertices is to transform $G[S]$ into 
a clique by adding edges. 

A clique is permanent if each of its edges is permanent.
To {\it join} a vertex $v$ to a clique $C$ is to add all edges between $v$ 
and vertices of $C$ that are not in $N(v)$. This operation makes sense only 
when $C$ is permanent or when turning $C$ to a permanent clique.
If $v$ already contains $C$ in its neighborhood, then {\em joining}
$v$ to $C$ is equivalent to making $C\cup \{v\}$ a permanent clique.
To {\em detach} $v$ from $C$ is the opposite operation 
(of deleting all edges between $v$ and the vertices of $C$).

The first, and simplest, algorithm for Cluster Editing finds a conflict 
triple in the input graph and ``resolves'' it by exploring the
three cases corresponding to deleting one of the two edges in the path 
or inserting the missing edge. 
In each of the three cases, the algorithm proceeds recursively.
As such, the said algorithm runs in $O(3^k)$ time 
(3 cases per conflict triple). 
The same idea has been used in almost all subsequent algorithms, which 
added more sophisticated branching rules. 

We first study the $(a,d)$-Cluster Editing problem, which corresponds to
the case where $k$ is not a parameter.
This version is similar to the one introduced in \cite{KU12}
where a bound $c$ is placed on the total number 
of edge-edit operations per single vertex. The corresponding
problem is called $c$-Cluster Editing. 
When $c \geq 4$, $c$-Cluster Editing is $\np$-hard (shown also in 
\cite{KU12}). 
This does not imply, however, that $(a,d)$-Cluster Editing is 
$\np$-hard when $a+d=4$.
To see this, note that $(a,0)$-Cluster Editing is solvable in polynomial
time for any $a \geq 0$: any solution must add all the needed
edges to get rid of 
all conflict-triples, if this is possible with at most $a$ edge 
additions per vertex.

Observe that if $(a,d)$-Cluster Editing is $\np$-hard then so is 
$(a',d')$-Cluster Editing for all $a' \geq a$ and $d' \geq d$. This follows 
immediately from the definition since every instance of the first is an 
instance of the second.
We shall prove that $(2,1)$-Cluster Editing is $\np$-hard, which yields the 
$\np$-hardness of $(a,1)$-Cluster Editing for all $a > 1$.
It was shown in \cite{KU12} that $(0,2)$-Cluster Editing is $\np$-hard. We 
observe in Section 5 
that $(0,1)$-Cluster Editing is solvable in polynomial time, 
and we conjecture that $(1,1)$-Cluster Editing is solvable in 
polynomial-time. 

A kernelization algorithm with respect to an input parameter $k$
is a polynomial-time reduction procedure that yields an equivalent problem
instance whose size is bounded by a function of $k$. 
Known kernelization algorithms for Cluster Editing have so far obtained 
kernels whose order (i.e., number of vertices) is bounded by a linear 
function of $k$ \cite{Cao-Chen,Chen-Meng,Guo09}.
The most recent order-bound is $2k$ \cite{Chen-Meng}. 

We introduce the minimum acceptable cluster size, $s$, as another parameter.
This is especially useful when the input graph is preprocessed so it is 
not expected to contain outlier vertices.  
Observe that the size of a cluster is expected to be relatively
large in some correlation clustering applications, such as social 
networks \cite{cs_1}.
We shall present (in Section 4)
a simple reduction procedure that leads to solving
Cluster Editing in polynomial 
time when $s > 2(a+d)$. When $s \leq 2(a+d)$, the same reduction
procedure delivers kernels whose number of edges is bounded by 
$\frac{5k}{4}max(a,2d)(a+3d)$.
In other words, when the constraints $a$ and $d$ are 
small constants, the kernel size is linear in $k$.

\section{The $(a,d)$-Cluster Editing Problem}

We consider the case where the Cluster Editing problem is 
parameterized by 
the add and delete capacities $a$ and $d$, only.
In other words, the main objective is to check whether
it is possible to obtain a cluster graph by performing at most $a$ 
additions and $d$
deletions per vertex. We denote the corresponding problem by
$(a,d)$-Cluster Editing.

It was shown in \cite{KU12} that Cluster Editing is $\np$-hard when 
the total number of edge-edit operations per vertex is four or more. 
However, as observed earlier, the result (and proof) of \cite{KU12}
cannot be used in studying the complexity of each of the separate cases 
(for $a$ and $d$) when $a+d=4$. 
We prove in this section that $(2,1)$-Cluster Editing is $\np$-hard
by reduction
from the {\sf 4-Bounded-Positive-One-in-Three-SAT} problem, 
which is formally defined as follows:

\vspace{10pt}
\noindent
\underline{\bf 4-Bounded-positive-1-in-3-SAT}

\noindent
{\bf Given:} 
A 3-CNF formula $\phi$ in which each variable appears only
positively in at most four clauses.\\
{\bf Question:} Is there a truth assignment that satisfies $\phi$ so 
that only one variable per clause is set to true?

\vspace{10pt}

4-Bounded-positive-1-in-3-SAT was shown to be $\np$-hard 
in \cite{DF09}.
The reduction to $(2,1)$-Cluster Editing proceeds by constructing a graph
with three types of vertices: {\em variable}, {\em clause} and 
{\em auxiliary}.
Each clause $c=(x\lor y\lor z)$ is represented by a clause gadget as
shown in Figure \ref{clauseGadget}.

\vspace{15pt}

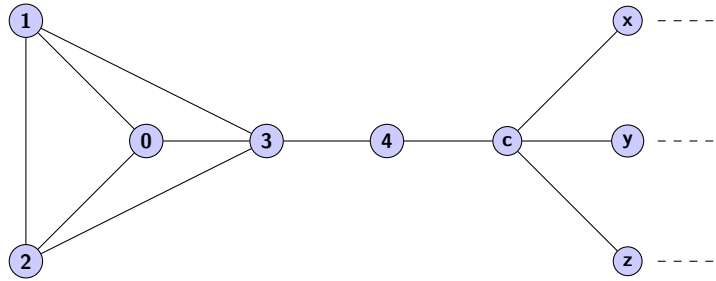
\begin{figure}[htb!]

\centering
\begin{tikzpicture}[->,
scale=0.8, every node/.style={anchor=center, scale=0.55},node distance=3cm,main node/.style={circle,fill=blue!20,draw,font=\sffamily\Large\bfseries}]

\node[main node] (0) at (3,0) {0};
\node [main node] (1) at (1,2) {1};
\node [main node] (2) at (1,-2) {2};
\node [main node] (3) at (5,0) {3};
\node [main node] (5) at (9,0) {c};
\node [main node] (7) at (11,0) {y};
\node [main node] (4) at (7,0) {4};
\node [main node] (6) at (11,-2) {z};
\node [main node] (8) at (11,2) {x};

\begin{scope}[-]
  \path
  (0) edge (2)
      edge (1)
      edge (3)
  (1) edge (2)
      edge (3)
  (2) edge (3)
  (3) edge (4)
  (4) edge (5)

 (5) edge (6)
  edge (7)
  edge (8);

\draw [dashed] (11.5,2) -- (12.5,2);
\draw [dashed] (11.5,0) -- (12.5,0);
\draw [dashed] (11.5,-2) -- (12.5,-2);

\end{scope}

\end{tikzpicture}

\caption{Clause gadget. The node corresponding to clause $c=(x \lor y \lor z)$ is 
forced to lose one of the edges connecting it to a variable gadget.}
\label{clauseGadget}
\end{figure}

\vspace{10pt}

Observe that edge $\{3,4\}$ must be deleted since the $K_4$ formed
by $\{0,1,2,3\}$ is permanent. It follows that edge $\{c,4\}$
cannot be deleted, so exactly one of the three edges $cx$, $cy$ and 
$cz$ must be deleted
by any feasible solution. The deletion of the edge $cx$ means that the
variable $x$ is set to true, otherwise it is false.

\begin{figure}[htb!]

\centering
\begin{tikzpicture}[->,
scale=0.8, every node/.style={anchor=center, scale=0.55},node distance=3cm,main node/.style={circle,fill=blue!20,draw,font=\sffamily\Large\bfseries}]

\node [main node] (0) at (3,0) {0};
\node [main node] (1) at (1,2) {1};
\node [main node] (2) at (1,-2) {2};
\node [main node] (3) at (5,0) {3};
\node [main node] (5) at (9,0) {c};
\node [main node] (7) at (11,2) {x};
\node [main node] (4) at (7,0) {4};
\node [main node] (8) at (11,-2) {y};

\begin{scope}[-]
  \path
  (0) edge (2)
      edge (1)
      edge (3)
  (1) edge (2)
      edge (3)
  (2) edge (3)
  (4) edge (5)
      edge (7)
      edge (8)
  (5) edge (8)
      edge (7)
  (7) edge (8);

\end{scope}

\end{tikzpicture}

\caption{Clause gadget resolved after setting $z$ to true while $x$ and
$y$ are set to false.}
\label{clauseGadgetResolved}
\end{figure}
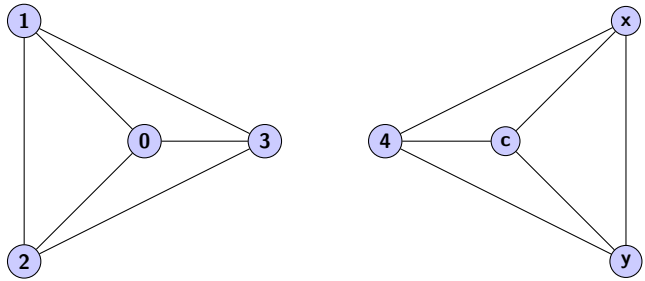

\vspace{15pt}

In the variable
gadget, shown in Figure \ref{variableGadget} below, the four edges connecting
the vertices labeled $x$ to the cycle are subject to the same edit operation:
either all four are deleted or all four are kept as permanent. 
To see this, note that it is not possible to delete exactly one edge of the 
cycle $\{1,2,3,4\}$. We either have to turn this cycle to a $K_4$ by
two edge additions and by removing each of the edges incident on vertices 
labeled $x$, or two opposite edges are deleted as shown in 
Figure \ref{variableGadgetResolved}.

\vspace{10pt}

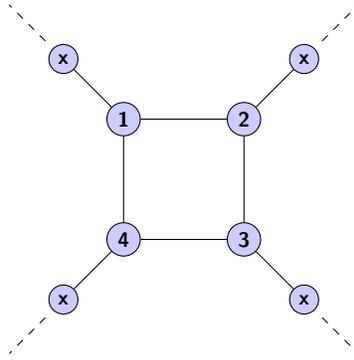
\begin{figure}[htb!]

\centering
\begin{tikzpicture}[->,
scale=0.8, every node/.style={anchor=center, scale=0.55},node distance=3cm,main node/.style={circle,fill=blue!20,draw,font=\sffamily\Large\bfseries}]

\node [main node] (0) at (0,2) {x};
\node [main node] (1) at (4,2) {x};
\node [main node] (2) at (1,1) {1};
\node [main node] (3) at (3,1) {2};
\node [main node] (4) at (3,-1) {3};
\node [main node] (5) at (1,-1) {4};
\node [main node] (6) at (4,-2) {x};
\node [main node] (7) at (0,-2) {x};

\begin{scope}[-]
  \path
  (0) edge (2)
  (2) edge (3)
      edge (5)
  (3) edge (4)
      edge (1)
  (4) edge (5)
      edge (6)
  (5) edge (7);

\draw [dashed] (-0.3,2.3) -- (-0.9,2.9);
\draw [dashed] (-0.3,-2.3) -- (-0.9,-2.9);
\draw [dashed] (4.3,2.3) -- (4.9,2.9);
\draw [dashed] (4.3,-2.3) -- (4.9,-2.9);

\end{scope}

\end{tikzpicture}

\caption{Variable gadget. Variable $x$ may belong to up to four clauses.
Each node labeled $x$ is connected to its corresponding clause gadget.}
\label{variableGadget}
\end{figure}

\vspace{15pt}

If a variable $x$ belongs to $i$ different clause(s) where $i < 4$, then 
$4-i$ vertice(s) labeled $x$ in the corresponding gadget will be pendant
(not connected to a clause gadget).  
If there is a satisfying truth assignment of a given formula of 
$4BP$-one-in-three-3SAT, then every variable that is set to true will
have its variable gadget turned into two clusters as shown in Figure
\ref{variableGadgetResolved}, and every variable $x$ that is
set to false will have its corresponding four vertices disconnected from
the $C_4$, which is turned into a $K_4$ in its gadget. In the latter case,
each of the vertices labeled $x$ will join a (corresponding) clause cluster 
as shown in Figure \ref{clauseGadgetResolved}.

\vspace{15pt}

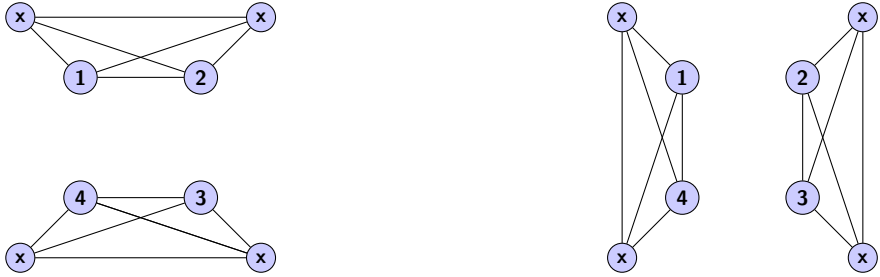
\begin{figure}[htb!]

\centering
\begin{tikzpicture}[->,
scale=0.8, every node/.style={anchor=center, scale=0.55},node distance=3cm,main node/.style={circle,fill=blue!20,draw,font=\sffamily\Large\bfseries}]

\node [main node] (0) at (0.5,2) {x};
\node [main node] (1) at (4.5,2) {x};
\node [main node] (2) at (1.5,1) {1};
\node [main node] (3) at (3.5,1) {2};
\node [main node] (4) at (3.5,-1) {3};
\node [main node] (5) at (1.5,-1) {4};
\node [main node] (6) at (4.5,-2) {x};
\node [main node] (7) at (0.5,-2) {x};

\node [main node] (8) at (10.5,2) {x};
\node [main node] (9) at (14.5,2) {x};
\node [main node] (10) at (11.5,1) {1};
\node [main node] (11) at (13.5,1) {2};
\node [main node] (12) at (13.5,-1) {3};
\node [main node] (13) at (11.5,-1) {4};
\node [main node] (14) at (14.5,-2) {x};
\node [main node] (15) at (10.5,-2) {x};

\begin{scope}[-]
  \path
  (0) edge (2)
      edge (1)
      edge (3)
  (2) edge (3)
      edge (1)
  (1) edge (3)
  (4) edge (5)
      edge (6)
      edge (7)
  (5) edge (6)
      edge (7)
      edge (6)
  (6) edge (7)

  (8) edge (10)
      edge (13)
      edge (15)
  (10) edge (13)
      edge (15)
  (13) edge (15)
  (9) edge (12)
      edge (11)
      edge (14)
  (12) edge (11)
      edge (14)
  (11) edge (14);

\end{scope}

\end{tikzpicture}

\caption{Variable gadget: the two cases where exactly two edges of $C_4$ are 
deleted.}
\label{variableGadgetResolved}
\end{figure}

Conversely, if there is a solution of the $(2,1)$-Cluster Editing instance, 
then exactly one variable-vertex neighbor of a clause-vertex is deleted. Due 
to above construction, the corresponding variable can be set to true and a 
satisfying assignment is obtained for the $4BP$-one-in-three-3SAT instance. 
We have thus proved the following.

\begin{lemma}
\label{2-1-CE}
The $(2,1)$-Cluster Editing problem is $\np$-Hard.
\end{lemma}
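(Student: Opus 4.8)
The plan is to establish $\np$-hardness of $(2,1)$-Cluster Editing by a polynomial-time reduction from 4-Bounded-positive-1-in-3-SAT, using exactly the gadget construction already displayed in Figures~\ref{clauseGadget}--\ref{variableGadgetResolved}. First I would describe the construction formally: given a formula $\phi$ with variables $V(\phi)$ and clauses $C(\phi)$, build a graph $G$ containing one clause gadget per clause and one variable gadget per variable, where the vertex labeled $x$ in a clause gadget for $c=(x\lor y\lor z)$ is identified with (or joined by an edge to) one of the four $x$-labeled pendant vertices of the variable gadget for $x$. Since each variable occurs in at most four clauses, there are enough attachment points, and unused $x$-vertices remain pendant. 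I would then set $\alpha\equiv 2$ and $\delta\equiv 1$ on every vertex (or the constant functions), and choose the budget $k$ to be the obvious total count of edits forced by an honest solution: three deletions plus two additions inside each ``true'' variable gadget ($C_4\to$ two triangles), one deletion per clause gadget, and for each ``false'' variable, two additions to turn its $C_4$ into a $K_4$ plus the detachments of its (up to four) pendant vertices, which coincide with the clause-side deletions already counted. The exact arithmetic is routine bookkeeping; the key point is that $k$ is polynomially bounded.

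Next I would prove the forward direction: a satisfying 1-in-3 assignment yields a feasible $(2,1)$-editing. Here I use the two structural facts already argued in the text. In a clause gadget, the $K_4$ on $\{0,1,2,3\}$ is permanent, so $\{3,4\}$ must go, hence $\{c,4\}$ stays, hence exactly one of $cx,cy,cz$ is deleted; setting the deleted variable to true matches the 1-in-3 condition. In a variable gadget, one cannot delete exactly one edge of the $4$-cycle $\{1,2,3,4\}$ without leaving a conflict triple, so either both ``vertical'' pairs stay (forcing all four $x$-edges to be deleted and the cycle completed to a $K_4$) or two opposite cycle edges are deleted, splitting the gadget into two triangles each absorbing its two $x$-vertices. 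The ``true'' case corresponds to the two-triangle split and the ``false'' case to the $K_4$-with-detached-$x$'s; in both cases I check that every vertex incurs at most two additions and at most one deletion, and that the clause-side and variable-side operations on the shared $x$-vertex are consistent (the $x$-vertex is detached from its variable gadget exactly when the clause deletes the edge $cx$). Summing the per-gadget edit counts gives at most $k$.

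For the converse, I would argue that \emph{any} feasible $(2,1)$-editing $G'$ of $G$ induces a satisfying 1-in-3 assignment. The degree constraints are tight enough that the local structure of each gadget in $G'$ must be one of the cases enumerated above: the permanent $K_4$ in every clause gadget forces the single deletion $\{3,4\}$ and then a single deletion among $cx,cy,cz$; the no-single-cycle-edge-deletion argument forces each variable gadget into either the $K_4$ form or the two-triangle form. Reading off ``$x$ is true iff the edge $cx$ is deleted in $c$'s gadget'' (equivalently, iff $x$'s gadget is in two-triangle form), I get a total assignment; feasibility at the clause gadgets says exactly one literal per clause is true, and feasibility at the variable gadgets says the assignment is globally consistent across the (up to four) clauses containing $x$. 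Hence $\phi$ is 1-in-3 satisfiable. Since $G$ is clearly constructible in polynomial time, this completes the reduction and proves the lemma.

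The main obstacle, and the step deserving the most care, is the converse direction's case analysis showing that feasibility genuinely \emph{forces} the clean gadget behavior --- in particular ruling out ``mixed'' solutions where, say, a clause vertex $c$ is detached from vertex $4$ and rejoined elsewhere, or where a variable gadget is edited in some way other than the two canonical forms, or where a shared $x$-vertex ends up in neither the variable cluster nor the clause cluster. I would handle this by a tight local-degree argument: with $\delta(v)\le 1$ a vertex can shed at most one neighbor, so a vertex of the permanent $K_4$ cannot leave it; with $\alpha(v)\le 2$ a $4$-cycle vertex can gain at most two neighbors; together these pin down the adjacency pattern around each gadget vertex in $G'$ and force the enumeration. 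Getting that local-to-global rigidity exactly right --- especially at the interface vertices shared between a clause gadget and a variable gadget --- is where the real work lies; everything else is counting.
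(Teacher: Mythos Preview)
Your proposal is correct and follows the paper's reduction exactly, using the same gadgets and the same forward/backward correctness arguments. Two minor slips worth flagging: the ``true'' variable gadget in Figure~\ref{variableGadgetResolved} resolves into two copies of $K_4$ (not two triangles), and $(2,1)$-Cluster Editing carries no global budget $k$, so that bookkeeping is unnecessary---but neither point affects the validity of the reduction, since your converse argument is purely local and does not invoke $k$.
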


The membership of $(a,1)$-Cluster Editing in $\np$ is obvious.
With the above Lemma,
and our definition of $(a,d)$-Cluster Editing, we obtain the following theorem.

\begin{theorem}
\label{a-1-CE}
For $a>1$, the $(a,1)$-Cluster Editing problem is $\np$-Complete.
\end{theorem}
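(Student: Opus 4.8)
The plan is to combine the hardness result just established with the monotonicity observation from Section~3 and a routine membership argument, so the proof will be short. First I would dispatch membership in $\np$: a candidate certificate is simply the set of edge-edit operations; one checks in polynomial time that applying it to $G$ produces a disjoint union of cliques and that, for every vertex $v$, the number of added (resp.\ deleted) edges incident on $v$ does not exceed $\alpha(v)\le a$ (resp.\ $\delta(v)\le 1$). A feasible solution performs at most $a+1$ operations per vertex, hence $\Oh(a\,|V(G)|)$ in total, so the certificate has size polynomial in the input. This yields $(a,1)$-Cluster Editing $\in\np$ for every fixed $a$.

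For hardness, I would invoke Lemma~\ref{2-1-CE}, which gives $\np$-hardness of $(2,1)$-Cluster Editing. By the observation preceding that lemma, every instance of $(2,1)$-Cluster Editing is verbatim an instance of $(a,1)$-Cluster Editing whenever $a\ge 2$: the function $\alpha$ already takes values in $\{0,1,2\}\subseteq\{0,1,\dots,a\}$ and $\delta$ in $\{0,1\}$, and the question posed is literally the same one. Thus the identity map is a trivial polynomial-time reduction from $(2,1)$-Cluster Editing to $(a,1)$-Cluster Editing, establishing $\np$-hardness of the latter for all $a>1$. Combining the two parts gives $\np$-completeness of $(a,1)$-Cluster Editing for every $a>1$.

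I do not expect any real obstacle here; the theorem is essentially a corollary of Lemma~\ref{2-1-CE}. The only point worth stating explicitly is that the ``reduction'' is nothing more than the inclusion of instance classes, which is precisely why the result can be asserted for all $a>1$ uniformly rather than argued case by case. All of the substantive work has already been carried out in the gadget construction behind Lemma~\ref{2-1-CE}.
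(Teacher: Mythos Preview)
Your proposal is correct and mirrors the paper's own argument: the paper simply notes that membership in $\np$ is obvious and derives $\np$-hardness from Lemma~\ref{2-1-CE} together with the earlier observation that every instance of $(a,d)$-Cluster Editing is literally an instance of $(a',d')$-Cluster Editing whenever $a'\ge a$ and $d'\ge d$. You have spelled out the certificate and the identity-map reduction a bit more explicitly, but the route is the same.
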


\section{A Reduction Procedure}

In general, a problem-reduction procedure is based on reduction rules, each 
of the form $\langle condition, action \rangle$, where $action$ is an 
operation that can be performed to obtain an equivalent instance of the 
problem whenever $condition$ holds. If a reduction is not possible, or the 
$action$ violates a problem-specific constraint, then we have a no instance. 
Moreover, a reduction rule is sound if its action results in an equivalent 
instance.

In what follows, we assume an instance $(G,k)$ of $(a,d,s,k)$-Cluster 
Editing is given. In other words, the problem is parameterized by
the add and delete capacities, as in the previous section, along with
the lower-bound on cluster size $s$ and the total number of 
edge-edit operations $k$.
Any added edge, in what follows, is automatically set to permanent.

The main reduction rules are given below. They are assumed to be applied 
successively in such a way that a rule is not applied, or re-applied, until
all the previous rules have been applied exhaustively.
Some of the rules are folklore.
We shall prove the soundness of new, non-obvious, reduction rules only.

\subsection{\bf Base-case Reductions}

\vspace{5pt}

\begin{rrule}
\label{automatic}
The reduction algorithm terminates and reports a no instance, whenever any of
$k, \alpha(v)$, or $\delta(v)$ is negative for some vertex $v \in V(G)$.
\end{rrule}

\begin{rrule}
\label{d0}
For any vertex $v$, if $\delta(v) = 0$ (or becomes zero), then $N(v)$ is 
cliquified. 
\end{rrule}

\noindent
Note that applying Rule \ref{d0} may yield negative parameters, which triggers 
Rule \ref{automatic} and causes the algorithm to terminate with a negative 
answer.

\begin{rrule}
\label{forbid} 
If $\alpha(u) = 0$, then set every non-edge of $u$ to forbidden.
\end{rrule}

\subsection{\bf Reductions Based on Conflict-Triples}

\vspace{5pt}

\begin{rrule}\label{permanent-triple} 
If $uv$ and $uw$ are permanent edges and $vw$ is a non-edge, then 
add $vw$ and decrement each of $k$, $\alpha(v)$ and $\alpha(w)$ by one. If $vw$ is a non-permanent edge,
then set $vw$ as permanent.
\end{rrule}

\begin{rrule}\label{permanent-forbidden} If $uv$ is a permanent edge and $uw$ 
is a forbidden edge, then set $vw$ as forbidden. If $vw$ exists, delete it 
and decrement $k$, $\delta(v)$ and $\delta(w)$ by one.
\end{rrule}

\subsection{\bf Reductions Based on Common Neighbors}

\vspace{5pt}

\begin{rrule}\label{common-upper}
If two non-adjacent vertices $u$ and $v$ have more than 
$2d$ common neighbors (or more than
$\delta(u) + \delta(v)$ common neighbors),
then add edge $uv$ and decrement each of $k$, $\alpha(u)$ and $\alpha(v)$ by 
one.
\end{rrule}

\noindent
{\bf Soundness:} If $u$ and $v$ are not in the same clique in the solution 
graph, then at least one of them has to lose more than $d$ edges, which 
is not possible.

\begin{rrule}\label{common-upper-adjacent} If two adjacent vertices, $u$ and 
$v$, have at least $2d-1$ common neighbors then set $uv$ as permanent edge.
\end{rrule}

\noindent
{\bf Soundness:} If the two vertices are in
different clusters of a solution graph, then deleting edge $uv$
reduces both $\delta(u)$ and $\delta(v)$ to at most $d-1$ each. Since they
have to lose their common neighbors, at least one of them
has to lose $d$ edges, which is not possible.

\begin{rrule}\label{common-lower}
If two vertices $u$ and $v$ are such that $|N(u)\setminus N(v)| > a+d$ then 
set edge $uv$ as forbidden. If $u$ and $v$ are adjacent, then delete $uv$ and 
decrement each of $k$, $\delta(u)$ and $\delta(v)$ by one.
\end{rrule}

\noindent
{\bf Soundness:} For $u$ and $v$ to be in the same cluster, at most $d$ 
neighbors may be deleted from $N(u)$ and at most $a$ neighbors can be 
added to $N(v)$.

\subsection{\bf Reductions Based on Cluster-Size}

\vspace{5pt}

\begin{rrule}\label{degree-lower} If there is a vertex $v$ 
satisfying: $s-1 > degree(v) + \alpha(v)$, then return No.
\end{rrule}

\noindent
{\bf Soundness:} Obviously, $v$ needs more than $\alpha(v)$ edges to be 
a member of a cluster in a solution graph.

\begin{rrule}\label{degree-lower-deletion} If there is a vertex 
$v$ satisfying: $\delta(v) > degree(v)+\alpha(v) -(s-1)$
set $\delta(v) = degree(v) +\alpha(v) - (s-1)$.
\end{rrule}

\noindent
{\bf Soundness:} If $\delta(v)$ edges incident on $v$ are deleted
(so $degree(v)$ is decremented by $\delta(v)$), we get 
a no-instance by Rule \ref{degree-lower}.

\begin{rrule}\label{cluster-common-lower} If $s > 2$ and two non-adjacent 
vertices $u$ and $v$ have {\bf less than} $s - 2a$ common neighbors
(or $s - \alpha(u) - \alpha(v)$ such neighbors) then set edge 
$uv$ as forbidden.
\end{rrule}

\noindent
{\bf Soundness:} If $\alpha(u)\alpha(v) = 0$,
then $uv$ is already forbidden by Rule 3. 
For $u$ and $v$ to belong to the same cluster, they must 
have at least $s-2$ common neighbors. After adding $uv$, the maximum number 
of common neighbors we can add is $2a -2$ ($a-1$ edges between $u$ and 
$N(v)$ and vice versa). The total number of common neighbors after adding 
all possible edges remains less than $s-2 (= s-2a+2a-2)$.

\begin{rrule}\label{common-lower-adjacent} If $s>2$ and two adjacent 
vertices $u$ and $v$ have $< s-2a-2$ common neighbors, then delete edge 
$uv$ and decrement each of $k$, $\delta(u)$ and $\delta(v)$ by one.
\end{rrule}

\noindent
{\bf Soundness:} The argument is similar to the previous case, except that 
each vertex must add at least $a$ neighbors of the other to obtain $s-2$ 
common neighbors.

\subsection{\bf Permanent and Isolated Cliques}

\vspace{5pt}

If a clique contains more than $2d$ vertices, then it is permanent
due to Rule \ref{common-upper-adjacent}. Moreover, no vertex can be joined
to an isolated permanent clique with more than $a$ vertices. As
a consequence, we obtain the following reduction rule.

\begin{rrule}\label{large-isolated-clique} If a clique $C$ is such that 
$N(C)\setminus C = \emptyset$
and $|C| > max(a,2d)$ then delete $C$.
\end{rrule}

The presence of permanent cliques can yield problem reductions 
that are not obtained by exhaustive applications of the above reduction rules. 
Note that a permanent edge is a special case of a permanent clique. 
The soundness of the following rules is obvious.

\begin{rrule}\label{join-clique} If a vertex $v$ has more than $d$ neighbors
in a permanent clique $C$, then $v$ is joined to $C$. 
\end{rrule}

\begin{rrule}\label{detach-clique}
Let $C$ be a permanent clique of size $> a$. If a vertex $v$ has less than 
$|C|-a$ neighbors in $C$, then $v$ is detached from $C$. 
\end{rrule}

An isolated clique is said to be {\em small} if its size is less than
$s$. In general, deleting isolated cliques is a sound reduction rule for 
the single-parameter Cluster Editing problem. In multi-parameterized
versions, we either add a parameter that bounds the number of
small isolated cliques, including outliers, 
or (to adhere to our problem formulation)
we must keep a number of such cliques. To see this, note the example
of a single isolated vertex $v$ and an isolated clique $C$ with
less than $\alpha(v)$ vertices. In this case, $v$ can be joined to 
$C$ to avoid having a small cluster that can potentially yield a
no answer.

At this stage, if there is an isolated clique $C$ in the so-far reduced
instance, then $|C| \leq max(a,2d)$. If $C$ is small (i.e.,
$1 \leq |C| < s$) then each vertex
of $C$ must be affected by at least one edge-editing operation. 
Consequently:

\begin{rrule}\label{small-isolated-cliques}
If the total number of vertices in small isolated cliques is $> 2k$ then
(halt and) report a no instance.
\end{rrule}

On the other hand, if an isolated clique $C$ is not small then it must
satisfy $s\leq |C| \leq max(a,2d)$. As observed above, 
we need to keep a few of these cliques. Since this is needed only when 
$s\geq 2$, we
can safely keep at most $k/2$ such cliques. 

\begin{rrule}\label{weak-isolated-cliques-rule} 
If $s> 1$ and there are more than $k$ non-small isolated cliques then delete 
all but at most $k/2$ of them.
\end{rrule}

\section{The Complexity of Multi-parameterized Cluster Editing}

An instance $(G,a,d,s,k)$ of Cluster Editing is said to be
reduced if the above reduction rules have been exhaustively applied to 
the input graph $G$.

Our second main theorem, proved below, addresses the optimization 
version of Cluster Editing, 
which seeks a minimum number of edge-edit operations. So $k$ is not a 
parameter in this case, but we keep the other constraints.

\begin{theorem} 
\label{poly-time}
When $s > 2(a+d)$, and $ad>0$, the Minimum Cluster Editing problem
is solvable in polynomial time.
\end{theorem}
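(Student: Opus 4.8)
The plan is to show that when $s > 2(a+d)$ and $ad > 0$, the reduction rules from Section 4 already determine the solution, so the reduced instance has essentially trivial structure and the remaining work can be done in polynomial time. First I would argue that in a reduced instance every edge is either permanent or forbidden, i.e. no ``undecided'' pair survives. The key observation is that for two adjacent vertices $u$ and $v$ lying in a common cluster of size at least $s$ in any solution, they share at least $s-2 > 2(a+d)-2$ common neighbors before any editing accounts for at most $2d$ losses and $2a$ gains — so $u$ and $v$ must already have more than $s-2-2a \geq 2d$ common neighbors, triggering Rule~\ref{common-upper-adjacent}, which makes $uv$ permanent. Symmetrically, if $u$ and $v$ are adjacent but end up in different clusters, the edge $uv$ must be deleted; I would show that in that case $|N(u)\setminus N(v)|$ must exceed $a+d$ (each cluster needs at least $s-1 > 2(a+d)$ members, and the overlap between the two neighborhoods is bounded), so Rule~\ref{common-lower} forbids $uv$. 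For a non-edge $uv$: either $u,v$ go to the same cluster, in which case they must acquire $\geq s-2$ common neighbors via at most $2a$ additions, forcing $\geq s-2a > 2d$ existing common neighbors and triggering Rule~\ref{common-upper}; or they go to different clusters, and Rule~\ref{cluster-common-lower} (with $s-2a > 2a$, in particular positive) forbids the pair once the common-neighbor count is below $s-2a$. The delicate point is handling pairs whose common-neighbor count sits in the narrow band that none of these rules covers; here I would use $s > 2(a+d)$ quantitatively to check the bands $[s-2a, \,\cdot\,]$ versus $[\,\cdot\,, 2d]$ actually partition all possibilities, leaving no gap.

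Once every pair is labeled permanent or forbidden, the permanent edges define an equivalence-like relation: I would show the permanent-edge graph is already a disjoint union of cliques. Indeed, a conflict triple among permanent/forbidden labels would have been closed off by Rules~\ref{permanent-triple} and~\ref{permanent-forbidden}, so permanent adjacency is transitive on the reduced instance, and a permanent edge between two clusters contradicts the forbidden status forced above. Hence the reduced graph itself is a cluster graph, every needed edit is forced, and we merely have to perform the forced additions/deletions and verify all constraints ($\alpha(v)$, $\delta(v)$, $s$) — a polynomial-time check via Rules~\ref{automatic}, \ref{degree-lower}, and~\ref{large-isolated-clique}. If every check passes we output the edit set; otherwise the instance is infeasible.

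The main obstacle I anticipate is the ``band'' argument: making the thresholds $2d$, $2d-1$, $a+d$, $s-2a$, $s-2a-2$ from the various rules line up so that, under $s > 2(a+d)$, no pair of vertices escapes classification. This requires carefully tracking, for a hypothetical solution clustering, how many common neighbors two vertices can have as a function of which clusters they land in and how many edits touch each — and then verifying the rule thresholds leave no uncovered interval. I expect $ad > 0$ to be needed precisely to rule out the degenerate regimes ($a=0$ handled separately by Rule~\ref{forbid}-style arguments, $d=0$ by Rule~\ref{d0}) where the common-neighbor rules are vacuous or the bands collapse. The rest — transitivity of permanent edges, termination, and the final feasibility verification — is routine bookkeeping over the reduction rules already proved sound.
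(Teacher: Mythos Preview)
Your proposal is correct in outline, but it is considerably more elaborate than the paper's argument, and the extra machinery is unnecessary. The solution-based case analysis (``if $u,v$ land in the same cluster then \ldots, if in different clusters then \ldots'') essentially re-derives the soundness of the reduction rules instead of using the rules directly. The paper's proof is precisely the ``band'' argument you flag as a residual delicate point --- but that argument is the \emph{entire} proof, not a cleanup step. Concretely: in a reduced instance no rule's hypothesis can still hold on an undecided pair. For a non-edge $uv$ that is not forbidden, Rule~\ref{cluster-common-lower} failing gives $|N(u)\cap N(v)| \ge s-2a$ while Rule~\ref{common-upper} failing gives $|N(u)\cap N(v)| \le 2d$; since $s-2a > 2d$ this is impossible. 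For an edge $uv$ that is not permanent, Rules~\ref{common-lower-adjacent} and~\ref{common-upper-adjacent} failing give $s-2a-2 \le |N(u)\cap N(v)| < 2d-1$, again impossible under $s > 2(a+d)$. No reference to a hypothetical solution is needed, and the argument works uniformly whether or not a solution exists. Your second paragraph (transitivity of permanent edges via Rules~\ref{permanent-triple} and~\ref{permanent-forbidden}, followed by a feasibility check) is a reasonable fleshing-out of what the paper leaves implicit, but once every edge is permanent and every non-edge is forbidden the conclusion is immediate.
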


\begin{proof}
Let $u$ and $v$ be non-adjacent vertices such that $uv$ is not forbidden. 
By Rules \ref{common-upper} and \ref{cluster-common-lower}:
$s-2a \leq |N(u)\cap N(v)| \leq 2d$. This is impossible
since $s-2a > 2d$, so $uv$ must be forbidden and
any two non-adjacent vertices must belong to different clusters. 
If $u$ and $v$ are adjacent vertices such that $uv$ is not
permanent, then by Rules \ref{common-upper-adjacent} and 
\ref{common-lower-adjacent}:
$s-2a-2 \leq |N(u)\cap N(v)| < 2d-1$. Again, this is impossible
(it implies $s < 2(a+d)+1$), so edges between adjacent vertices must
be permanent. It follows that in a reduced instance
any edge is either permanent or deleted. 
\end{proof}

In a typical clustering application, the total number of errors per 
data element is expected to be small and should be much smaller than a 
cluster size. In the seemingly common case where the cluster size is 
large compared to such error, Theorem \ref{poly-time} asserts that Cluster 
Editing is solvable (exactly) in polynomial time. 

When $s \leq 2(a+d)$, the Minimum Cluster Editing problem remains 
$\np$-hard even if $a$ and $d$ are small constants and the size of a 
cluster is not important (i.e., $s=1$).
In this case, the reduction procedure may still help to obtain faster 
parameterized algorithms. 
In fact, we shall prove that
applying the above reduction rules yields equivalent instances 
whose (total) size is bounded by a linear function of the main parameter $k$. 
The following key lemma follows from the reduction procedure.

\begin{lemma}
\label{degree-bound} 
Let $(G,a,d,s,k)$ be a reduced yes-instance of Cluster Editing. Then 
every vertex of $G$ has at most $a+3d$ neighbors.
\end{lemma}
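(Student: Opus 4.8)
The plan is to bound the degree of an arbitrary vertex $v$ in a reduced yes-instance by accounting separately for the neighbors $v$ retains in its final cluster and those it must lose. Let $C$ be the cluster containing $v$ in some feasible solution. First I would observe that since the instance is a yes-instance, $v$ loses at most $\delta(v) \le d$ edges and gains at most $\alpha(v) \le a$ edges, so $|C| - 1 \le degree(v) + a$, i.e.\ $degree(v) \ge |C| - 1 - a$; this is a lower bound and is not directly what we want, but it tells us the useful structural fact that $v$'s current neighborhood in $C$ has size at least $|C| - 1 - a$. For the upper bound, I would partition $N(v)$ into $N(v) \cap C$ and $N(v) \setminus C$. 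The second part has size at most $d$, since every neighbor of $v$ outside its final cluster corresponds to a deleted edge at $v$. So it remains to bound $|N(v) \cap C|$ by $a + 2d$, or more crudely to bound $|C|$ itself.

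The key step is to bound the size of the cluster $C$ using the cluster-size reduction rules, in particular Rule~\ref{large-isolated-clique} together with Rules~\ref{join-clique} and~\ref{detach-clique}. Here is where I expect the main difficulty to lie: $C$ need not be isolated or permanent in the reduced graph, so I cannot directly invoke the bound $|C| \le \max(a,2d)$ from Rule~\ref{large-isolated-clique}. Instead I would argue as follows. Pick any vertex $w \in C$. Every vertex of $C \setminus \{w\}$ that is \emph{not} currently adjacent to $w$ must have $w$ added to it, so there are at most $\alpha(w) \le a$ such vertices; hence $|N(w) \cap C| \ge |C| - 1 - a$. Symmetrically, since $w$ stays in $C$, it loses at most $\delta(w) \le d$ neighbors, so $w$ has at most $d$ neighbors outside $C$. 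Combining, $degree(w) \ge |C| - 1 - a$ for every $w \in C$. Now I would apply the common-neighbor reduction rules to a pair $u, w \in C$: any two vertices destined for the same cluster satisfy $|N(u) \setminus N(w)| \le a + d$ by the soundness argument of Rule~\ref{common-lower} (otherwise the rule would have set $uw$ forbidden, contradicting $u,w \in C$). Since every vertex of $C$ lies in $N(u) \cup \{u\}$ up to at most $a$ missing edges, this gives control on how much the neighborhoods inside $C$ can differ, but not yet an absolute bound on $|C|$.

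To close the gap I would reconsider the structure more carefully: the real source of the bound must be Rule~\ref{common-upper-adjacent} (or its consequence that large cliques are permanent) combined with the observation preceding Rule~\ref{large-isolated-clique}. If $|C| > 2d$, then any two adjacent vertices of $C$ with $\ge 2d - 1$ common neighbors — which holds once $|C|$ is large enough after accounting for the at most $a$ non-adjacencies within $C$ — have their edge set permanent, so a large enough sub-clique of $C$ becomes permanent; then Rules~\ref{join-clique}/\ref{detach-clique} force the rest of $C$ and its boundary into a rigid configuration, and if $C$ ends up isolated Rule~\ref{large-isolated-clique} caps $|C| \le \max(a,2d)$. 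The cleanest route, which I would try first, is simply: in the reduced graph consider $v$'s neighbors; those inside $C$ — I claim there are at most $a + 2d$ of them — and those outside — at most $d$ of them — summing to $a + 3d$. The bound $|N(v) \cap C| \le a + 2d$ should follow because a neighbor $x$ of $v$ inside $C$ contributes a common neighbor or is itself connected to the core of $C$; if there were more than $a + 2d$ such $x$, then $v$ and some $w \in C$ would have more than $2d$ common neighbors (after discounting the $\le a$ edges $v$ still needs to add), triggering Rule~\ref{common-upper} or making their edge permanent via Rule~\ref{common-upper-adjacent} in a way that, iterated, collapses $C$ to size $\le \max(a,2d) \le a + 2d$. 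I expect the bookkeeping reconciling ``$+a$'' slack from added edges with the ``$2d$'' common-neighbor thresholds to be the fussy part, and I would present it by fixing the feasible solution first and then reading off which reduction rules must have fired.
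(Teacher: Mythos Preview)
Your approach has a genuine gap, and it stems from routing the argument through the solution cluster $C$ rather than arguing directly from the reduction rules.

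The paper's proof is short and never looks at the solution. It assumes for contradiction that some $v$ has $|N(v)| > a+3d$ and shows the instance is not reduced. For any \emph{non}-neighbor $u$ of $v$, Rule~\ref{common-upper} gives $|N(u)\cap N(v)| \le 2d$, hence $|N(v)\setminus N(u)| > a+3d-2d = a+d$, so Rule~\ref{common-lower} has already set $uv$ forbidden. For any \emph{neighbor} $u$ of $v$, either $|N(u)\cap N(v)| \ge 2d-1$ and Rule~\ref{common-upper-adjacent} has set $uv$ permanent, or $|N(u)\cap N(v)| \le 2d-2$ and then $|N(v)\setminus N(u)| > a+3d-(2d-2) > a+d$, so Rule~\ref{common-lower} has deleted $uv$. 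Thus every edge at $v$ is already permanent or deleted; Rules~\ref{permanent-triple} and~\ref{permanent-forbidden} then cliquify and isolate $N[v]$, and Rule~\ref{large-isolated-clique} removes it --- contradicting that $v$ is present.

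Your route, by contrast, needs $|N(v)\cap C| \le a+2d$, and the sketch you give does not establish this. The ``collapse $C$ to size $\le \max(a,2d)$'' step is where it fails: to force $C$ to become a permanent clique via Rules~\ref{common-upper} and~\ref{common-upper-adjacent}, the common-neighbor count you can guarantee between two vertices of $C$ is only $|C|-2-2a$ (each of the two vertices may be missing up to $a$ edges into $C$), so you need $|C| > 2a+2d+1$ or so before the rules fire. That yields at best $|N(v)| \le (2a+2d+1)+d$, not $a+3d$. The bookkeeping you flag as ``fussy'' is not just fussy --- it loses an additive $a$. The reason the paper's argument avoids this loss is that it compares $N(v)$ directly against $N(u)$ for each $u$, rather than detouring through a cluster whose vertices may each be missing $a$ edges.
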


\begin{proof}
Assume there is a vertex $v$ such that $|N(v)| > a+3d$.
By Rule \ref{common-upper}, any vertex $u$ is either a neighbor of $v$ or 
has at most $2d$ common neighbors with $v$. In the 
latter case, $v$ has more than $a+d$ vertices that are not common with $u$. 
Edge $uv$ would then be forbidden by Rule \ref{common-lower}. By 
Rules \ref{common-upper-adjacent}
and \ref{common-lower}, every edge incident on $v$ is either deleted or
becomes permanent. Applying Rules \ref{permanent-triple} 
and \ref{permanent-forbidden} exhaustively leads to cliquifying and 
isolating $N[v]$, which then results in deleting $N[v]$ due to Rule
\ref{large-isolated-clique}.
\end{proof}

\subsection{\bf The case where $a$ and $d$ are small fixed constants}

\vspace{5pt}

\noindent
It was shown in \cite{KU12} that Cluster Editing is $\np$-hard
when $a=0$ and $d=2$. This implies the $\np$-hardness of $(a,2)$-Cluster 
Editing for $a\geq 0$. 

When $d=1$, our reduction procedure
results in a triangle-free instance or a no answer. To see this observe
that every clique of size three or more becomes permanent ($2d+1=3$).
Moreover, a vertex with more than one 
neighbor in a clique of size three must be in the clique (otherwise we 
have a no instance) while an edge joining a vertex to only one member
of a triangle must be deleted. It follows that cliques of size three
or more become isolated and deleted.  

We also observe that $(0,1)$-Cluster Editing is solvable in polynomial time. 
To see this, note that a vertex of degree three must be part of a 
clique of size at least three, since at most one of its incident edges
can be deleted. Unless we have a no instance, such vertex cannot exist in a 
reduced instance, being triangle-free.
It follows that a reduced yes instance must have a maximum degree of two.
In this case, the problem is equivalent to the {\sc Maximum Matching}
problem.

At this stage, the only remaining open problem is 
whether $(1,1)$-Cluster Editing is solvable in polynomial time. 
We believe it is the case, especially since every instance is
triangle-free, as observed above. Moreover, the reader would easily 
observe that every such instance is of maximum degree three.

\subsection{\bf Kernelization}

\vspace{5pt}

\noindent
We now give a bound on the number of vertices in a reduced
instance.

\begin{theorem}
\label{kernel-order}
There is a polynomial-time reduction algorithm that takes an arbitrary 
instance of (multi-parameterized) Cluster Editing and either determines that 
no solution exists or produces an equivalent instance whose order is 
bounded by $\frac{5k}{2}max(a,2d)$.
\end{theorem}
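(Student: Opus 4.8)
The plan is to bound the number of vertices of a reduced instance that admits a solution; a reduced no-instance is already detected by the rules, so there is nothing to prove in that case. So fix a reduced yes-instance $(G,a,d,s,k)$ together with a feasible solution using at most $k$ edge-edit operations, and let $A$ be the set of \emph{affected} vertices, i.e.\ the endpoints of the (at most $k$) edited pairs; then $|A|\le 2k$. I would then partition the clusters of the solution graph into three classes according to how they meet $A$: class~I (no vertex in $A$), class~II (some but not all vertices in $A$), and class~III (entirely contained in $A$), and bound, separately, the total number of vertices falling in each class.

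Class~III clusters together contain at most $|A|\le 2k$ vertices. For a class~I cluster $C$ no vertex is touched by any edit, so every $v\in C$ has $N_G[v]=C$ and $C$ has no edge leaving it; hence $C$ is an isolated clique of $G$, and now the isolated-clique rules take over. By Rule~\ref{large-isolated-clique} each such clique has size at most $\max(a,2d)$; by Rule~\ref{small-isolated-cliques} the small ones (size $<s$) span at most $2k$ vertices in total; and by Rule~\ref{weak-isolated-cliques-rule} (together with the fact that isolated cliques are removed outright when $s=1$) the non-small ones number at most $k/2$, contributing at most $\frac{k}{2}\max(a,2d)$ vertices.

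The main work is class~II. Every class~II cluster $C$ contains an unaffected vertex $v$, and since $v$ is untouched its cluster is exactly $N_G[v]$; therefore $|C|=\deg_G(v)+1\le a+3d+1$ by Lemma~\ref{degree-bound}. It remains to bound the \emph{number} of class~II clusters, and here I would charge each of them to a distinct edit: if $C$ carries an added edge, both its endpoints lie in $C$, so this internal addition is private to $C$; if $C$ carries no added edge, then $G[C]$ is already complete, so $C$ is a clique of $G$, and I would invoke Rules~\ref{common-upper-adjacent}, \ref{join-clique} and \ref{detach-clique} to argue that in a reduced instance such a clique is either small (already accounted for) or can only be incident to deletions whose other endpoint cannot be charged in the same way. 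This should cap the number of class~II clusters by $k$ rather than the naive $2k$. Combining the three contributions and using $a+3d\le\frac{5}{2}\max(a,2d)$ (since $a\le\max(a,2d)$ and $3d\le\frac{3}{2}\cdot 2d\le\frac{3}{2}\max(a,2d)$) yields the claimed bound $\frac{5k}{2}\max(a,2d)$, after absorbing the remaining $O(k)$ terms into $\frac{k}{2}\max(a,2d)$ (note $\max(a,2d)\ge 2$ in every case except the trivially polynomial instances with $d=0$, $a\le 1$).

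The delicate step, and the one I expect to be the real obstacle, is precisely the charging in the class~II count: a careless argument gives two clusters per deleted edge and pushes the coefficient past $\frac{5}{2}$, so one genuinely has to use that, after exhaustive reduction, a cross-cluster deletion survives only between clusters at least one of which is forced to be short (small, or permanent of size at most $a+d$), so that such deletions do not each cost a full $a+3d+1$ vertices. A secondary nuisance is the bookkeeping of the additive $O(k)$ terms (affected vertices, small isolated cliques) and the boundary cases $s=1$ and $\max(a,2d)$ small, which must be folded in without exceeding the stated coefficient.
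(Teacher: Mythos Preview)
Your decomposition by solution clusters is different from the paper's argument, which never invokes Lemma~\ref{degree-bound} here. The paper instead bounds $A\cup N(A)$ directly: for each affected vertex $x$, the set $B_x=N(x)\setminus A$ of unaffected neighbors induces a clique in $G$ (no edit touches any pair in $B_x$), so if $|B_x\cup\{x\}|>\max(a,2d)$ then Rules~\ref{common-upper-adjacent}, \ref{join-clique}, \ref{detach-clique} and \ref{large-isolated-clique} would already have made this clique permanent, absorbed the rest of $N(B_x)$, isolated it, and deleted it. Hence in a reduced instance $|B_x|\le\max(a,2d)-1$ for every $x\in A$, giving $|A\cup N(A)|\le 2k\max(a,2d)$ in one stroke; the extra $\frac{k}{2}\max(a,2d)$ then comes from the surviving non-small isolated cliques exactly as in your class~I.

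Your route, as written, does not reach the stated constant. Even granting the unfinished charging argument that limits class~II to $k$ clusters, class~II alone contributes up to $k(a+3d+1)\le\frac{5k}{2}\max(a,2d)+k$, and you must still \emph{add} the $\frac{k}{2}\max(a,2d)$ from the non-small isolated cliques of class~I, together with the $O(k)$ terms from class~III. The sum is about $3k\max(a,2d)$, not $\frac{5k}{2}\max(a,2d)$; the $\frac{k}{2}\max(a,2d)$ from class~I is a contribution, not slack into which other terms can be absorbed. To salvage your approach you would need either a sharper bound on class~II cluster sizes (namely $\max(a,2d)$ rather than $a+3d+1$, which is exactly what the paper's $B_x$ argument delivers) or a bound on the number of class~II clusters well below $k$. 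The clean fix is to drop the degree-bound route for class~II and argue directly, as the paper does, that the unaffected part of any cluster meeting $A$ has at most $\max(a,2d)-1$ vertices per affected vertex it contains.
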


\begin{proof}
Let $(G,a,d,k)$ be a reduced instance of Cluster Editing.
Let $A$ be the set of vertices of $G$ that are incident to edges that
must be deleted or to non-edges that must be added to obtain some 
minimum solution, if any. In other words, $A$ is the set of vertices
{\em affected} by edge-editing operations.
If $(G,a,d,k)$ is a yes instance, then $|A|\leq 2k$.
Let $B=N(A)$ and $C= V(G)\setminus (A\cup B)$. 

The connected components of $B$ are cliques, being non-affected by
edge-editing. For the same reason, every vertex $x\in A$ satisfies: 
$G[B\cap N(x)]$ is a clique. 
Let $B_x = N(x)\cap B$ for some $x\in A$
and let $N(B_x) = \{a\in A: B_x \subset N(a)\}$. 
If $|B_x \cup \{x\}| > max(a,2d)$ then our reduction procedure 
sets $B_x \cup \{x\}$ to permanent and automatically joins every element 
of $N(B_x)$ to a larger isolated clique containing $B_x$, which is
then deleted by Rule \ref{large-isolated-clique}.
Therefore $|A\cup B|\leq 2kmax(a,2d)$.

Observe that isolated
cliques of size $< s$ must be totally contained in $A$, so their
elements are part of the $2kmax(a,2d)$ vertices of $A\cup B$.
It follows by Rule
\ref{weak-isolated-cliques-rule} that $C$ 
consists of at most $k/2$ (non-small) isolated cliques, each of size 
$max(a,2d)$. 
The proof is now complete.

\end{proof}

\noindent
The following corollary follows easily from Theorem \ref{kernel-order} and
Lemma \ref{degree-bound}.

\begin{corollary}
\label{kernel-size}
There is a polynomial-time reduction algorithm that takes an arbitrary
instance of Multi-parameterized Cluster Editing
and either determines that no solution exists or produces an equivalent 
instance whose size is bounded by $\frac{5k}{4}max(a,2d)(a+3d)$.
\end{corollary}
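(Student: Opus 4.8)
The plan is to derive the edge bound by combining the vertex bound of Theorem~\ref{kernel-order} with the degree bound of Lemma~\ref{degree-bound} via a handshake argument. First I would run the polynomial-time reduction algorithm of Theorem~\ref{kernel-order}. Either it already reports that no solution exists, in which case we are done, or it returns a fully reduced equivalent instance $(G',a,d,s,k)$ with $|V(G')|\le \frac{5k}{2}\max(a,2d)$.

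Next I would apply Lemma~\ref{degree-bound} to the surviving instance: on a reduced yes-instance every vertex has at most $a+3d$ neighbours. As in the proof of that lemma, if some vertex of $G'$ had more than $a+3d$ neighbours then the reduction rules would already have isolated and deleted its closed neighbourhood or produced a no-answer; hence for the reduced instance either the algorithm has halted or the maximum degree is at most $a+3d$. The handshake lemma then gives
\[
|E(G')|\;\le\;\frac{1}{2}\,|V(G')|\,(a+3d)\;\le\;\frac{1}{2}\cdot\frac{5k}{2}\max(a,2d)\cdot(a+3d)\;=\;\frac{5k}{4}\max(a,2d)(a+3d),
\]
which is exactly the claimed bound on the size (number of edges) of the kernel; since $|V(G')|$ is also $\Oh(k\max(a,2d))$, the total encoding length of the reduced instance is linear in $k$ whenever $a$ and $d$ are fixed.

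I do not expect a genuine obstacle: the corollary is an arithmetic consequence of the two cited results. The only point needing a line of care is that Lemma~\ref{degree-bound} is stated for yes-instances, so the conclusion must be phrased as the dichotomy ``either no solution exists, or the size is bounded'' — matching the statement of the corollary — rather than as an unconditional degree bound; and the degenerate $(a,0)$ cases, where $a+3d$ can drop below $2$, need no separate handling here since $(a,0)$-Cluster Editing is solvable in polynomial time outright (Section~3) and requires no kernelization.
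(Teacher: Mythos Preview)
Your proposal is correct and matches the paper's own approach: the corollary is stated as an immediate consequence of Theorem~\ref{kernel-order} and Lemma~\ref{degree-bound}, and the handshake computation you give is precisely the intended derivation. Your remark about phrasing the degree bound as a dichotomy (since Lemma~\ref{degree-bound} is stated for yes-instances) is a reasonable piece of care that the paper leaves implicit.
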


The above kernel bound is of interest when $a$ and $d$ are 
fixed small constants. It could be particularly useful since
previously known kernelization 
algorithms for single-parameter Cluster Editing are not applicable to 
the multi-parameterized version especially due to their 
use of edge contraction, which also results in assigning 
weights to vertices.
The difficulty of applying edge contraction stems from the fact that
merging two adjacent vertices does not yield valid edge editing bounds
on the resulting vertex.
To explicate, assume we contract edge $uv$ and assign  
$\delta(u)+\delta(v)$ as edge-deletion bound to the resulting vertex.
Then we might either exceed the $d$-bound or allow the deletion of 
more than $\delta(u)$ neighbors of $u$ (from the new combined neighborhood), 
thereby obtaining a wrong solution.
 
\section{Acknowledgement}

The work on this project has been partially supported by the Lebanese American
University under grant SRDC-t2013-45.

\section{Concluding Remarks}

We considered the $(a,d)$-Cluster Editing problem, a constrained 
version of Cluster Editing where at most $a$ edges can be added
and at most $d$ edges can be deleted per single vertex.
We proved that $(a,d)$-Cluster Editing is $\np$-hard, in general, for 
any $a\geq 2$ and $d\geq 1$. 
We also observed that $(0,1)$-Cluster 
Editing is solvable in polynomial time
while the $(0,2)$ case is $\np$-hard \cite{KU12}.
It remains open whether $(1,1)$-Cluster Editing can be solved in polynomial 
time. 

We presented a reduction procedure that solves the 
Cluster Editing problem in polynomial-time when the 
smallest acceptable cluster size exceeds twice the 
total allowable edge operations per vertex.
It is worth noting that edge-editing operations 
per single data element are expected to be small compared to
the cluster size,
especially if the input is free of outliers. 
%
When the bounds on the two edge-edit operations per vertex
are small constants, and the cluster size is unconstrained,
our reduction algorithm gives a kernel whose size is linear 
in the main parameter $k$. 
Previously known kernelization algorithms 
are not applicable to the multi-parameterized version
and achieve a linear bound on the number of vertices only. 

The reduction algorithm presented in this paper has been implemented
along with the simple branching on conflict triples 
described in Section 2. 
Experiments show a promising performance 
especially on graphs obtained from clinical research 
data \cite{timothy, heidi}.  
The multi-parameterized algorithm finished consistently in 
seconds, reporting significant clusters. 

Finally we note that different Cluster Editing solutions to the same problem 
instance may differ in terms of the practical significance of obtained 
clusters. 
A possible approach would be to combine enumeration and editing,
by enumerating all possible Cluster Editing solutions. 
Moreover, in some biology applications, a data element may be an 
active member of different clusters. In such cases, the enumeration of all 
maximal cliques was used as a possible alternative \cite{abu2005}.
To permit a data element to
belong to more than one cluster, we suggest 
allowing {\em vertex-division} (AKA. vertex {\em cleaving})
as another edit operation whereby a vertex is replaced by two
different vertices. 
The number of allowed divisions per vertex can be added as (yet) another
parameter. This latter formulation is under consideration for future work.

\bibliographystyle{plain}

\bibliography{references}

\end{document}